\definecolor{refcolor}{RGB}{0,0,190}
\newtheorem{theorem}{Theorem}
\newtheorem{defCustom}{}
\newcommand{\setdefCustomtag}[1]{
  \let\oldthedefCustom\thedefCustom
  \renewcommand{\thedefCustom}{\texttt{#1}}
  \g@addto@macro\enddefCustom{
    \global\let\thedefCustom\oldthedefCustom}
  }
\theoremstyle{remark}
\theoremstyle{definition}
\newtheorem{condition}{Condition}
\renewcommand{\thecondition}{\arabic{condition}}
\newcommand{\setconditiontag}[1]{
  \let\oldthecondition\thecondition
  \renewcommand{\thecondition}{#1}
  \g@addto@macro\endcondition{
    \global\let\thecondition\oldthecondition}
  }
\newtheorem{definition}{Definition}
\theoremstyle{remark}
\newtheorem{example}{Example}
\definecolor{greenPsi}{rgb}{0.0, 0.375, 0.0}
\definecolor{blueStruct}{rgb}{0.0, 0.0, 1.0}
\definecolor{redStruct}{rgb}{1.0, 0.0, 0.0}
\begin{document}


\def\({\left(}
\def\){\right)}
\def\[{\left[}
\def\]{\right]}

\newcommand{\tn}{\textnormal}
\newcommand{\ds}{\displaystyle}
\newcommand{\dsfrac}[2]{\displaystyle{\frac{#1}{#2}}}

\newcommand{\boplus}{\textstyle{\bigoplus}}
\newcommand{\botimes}{\textstyle{\bigotimes}}
\newcommand{\bcup}{\textstyle{\bigcup}}
\newcommand{\bsqcup}{\textstyle{\bigsqcup}}
\newcommand{\bcap}{\textstyle{\bigcap}}

\newcommand{\MQS}{\ref{def:MQS}}
\newcommand{\HSF}{\ref{def:HSF}}

\newcommand{\struct}{\mc{S}}
\newcommand{\kind}{\mc{K}}

\newcommand{\statespace}{\mathcal{S}}
\newcommand{\hilbert}{\mathcal{H}}
\newcommand{\vectorspace}{\mathcal{V}}
\newcommand{\mc}[1]{\mathcal{#1}}
\newcommand{\ms}[1]{\mathscr{#1}}

\newcommand{\wh}[1]{\widehat{#1}}
\newcommand{\dwh}[1]{\wh{\rule{0ex}{1.3ex}\smash{\wh{\hfill{#1}\,}}}}

\newcommand{\wt}[1]{\widetilde{#1}}
\newcommand{\wht}[1]{\widehat{\widetilde{#1}}}
\newcommand{\on}[1]{\operatorname{#1}}

\newcommand{\qmU}{$\mathscr{U}$}
\newcommand{\qmR}{$\mathscr{R}$}
\newcommand{\qmUR}{$\mathscr{UR}$}
\newcommand{\qmDR}{$\mathscr{DR}$}

\newcommand{\R}{\mathbb{R}}
\newcommand{\C}{\mathbb{C}}
\newcommand{\Z}{\mathbb{Z}}
\newcommand{\K}{\mathbb{K}}
\newcommand{\N}{\mathbb{N}}
\newcommand{\Prj}{\mathcal{P}}
\newcommand{\abs}[1]{\left|#1\right|}

\newcommand{\de}{\operatorname{d}}
\newcommand{\tr}{\operatorname{tr}}
\newcommand{\im}{\operatorname{Im}}

\newcommand{\ie}{\textit{i.e.}\ }
\newcommand{\vs}{\textit{vs.}\ }
\newcommand{\eg}{\textit{e.g.}}
\newcommand{\cf}{\textit{cf.}\ }
\newcommand{\etc}{\textit{etc}}
\newcommand{\etal}{\textit{et al.}}

\newcommand{\Span}{\tn{span}}
\newcommand{\pde}{PDE}
\newcommand{\U}{\tn{U}}
\newcommand{\SU}{\tn{SU}}
\newcommand{\GL}{\tn{GL}}

\newcommand{\schrod}{Schr\"odinger}
\newcommand{\vonneum}{Liouville-von Neumann}
\newcommand{\ks}{Kochen-Specker}
\newcommand{\leggarg}{Leggett-Garg}
\newcommand{\bra}[1]{\langle#1|}
\newcommand{\ket}[1]{|#1\rangle}
\newcommand{\braket}[2]{\langle#1|#2\rangle}
\newcommand{\ketbra}[2]{|#1\rangle\langle#2|}
\newcommand{\expectation}[1]{\langle#1\rangle}
\newcommand{\Herm}{\tn{Herm}}
\newcommand{\Sym}[1]{\tn{Sym}_{#1}}
\newcommand{\meanvalue}[2]{\langle{#1}\rangle_{#2}}

\newcommand{\btimes}{\boxtimes}
\newcommand{\btimess}{{\boxtimes_s}}

\newcommand{\h}{\mathbf{(2\pi\hbar)}}
\newcommand{\x}{\mathbf{x}}
\newcommand{\xThree}{\boldsymbol{x}}
\newcommand{\z}{\mathbf{z}}
\newcommand{\q}{\mathbf{q}}
\newcommand{\p}{\mathbf{p}}
\newcommand{\0}{\mathbf{0}}
\newcommand{\annih}{\widehat{\mathbf{a}}}

\newcommand{\cs}{\mathscr{C}}
\newcommand{\ps}{\mathscr{P}}
\newcommand{\xhat}{\widehat{\x}}
\newcommand{\phat}{\widehat{\mathbf{p}}}
\newcommand{\fqproj}[1]{\Pi_{#1}}
\newcommand{\cqproj}[1]{\wh{\Pi}_{#1}}
\newcommand{\cproj}[1]{\wh{\Pi}^{\perp}_{#1}}

\newcommand{\M}{\mathbb{E}_3}
\newcommand{\D}{\mathbf{D}}
\newcommand{\dn}{\tn{d}}
\newcommand{\db}{\mathbf{d}}
\newcommand{\n}{\mathbf{n}}
\newcommand{\m}{\mathbf{m}}
\newcommand{\V}[1]{\mathbb{V}_{#1}}
\newcommand{\F}[1]{\mathcal{F}_{#1}}
\newcommand{\Fvacuumfield}{\widetilde{\mathcal{F}}^0}
\newcommand{\nD}[1]{|{#1}|}
\newcommand{\Lin}{\mathcal{L}}
\newcommand{\End}{\tn{End}}
\newcommand{\vbundle}[4]{{#1}\to {#2} \stackrel{\pi_{#3}}{\to} {#4}}
\newcommand{\vbundlex}[1]{\vbundle{V_{#1}}{E_{#1}}{#1}{M_{#1}}}
\newcommand{\rep}{\rho_{\scriptscriptstyle\btimes}}

\newcommand{\intl}[1]{\int\limits_{#1}}

\newcommand{\moyalBracket}[1]{\{\mskip-5mu\{#1\}\mskip-5mu\}}

\newcommand{\Hint}{H_{\tn{int}}}

\newcommand{\quot}[1]{``#1''}

\def\sref #1{\S\ref{#1}}

\newcommand{\dBB}{de Broglie--Bohm}
\newcommand{\dBBt}{{\dBB} theory}
\newcommand{\pwt}{pilot-wave theory}
\newcommand{\PWT}{PWT}
\newcommand{\NRQM}{{\textbf{NRQM}}}

\newcommand{\image}[3]{
\begin{center}
\begin{figure}[!ht]
\includegraphics[width=#2\textwidth]{#1}
\caption{\small{\label{#1}#3}}
\end{figure}
\vspace{-0.35in}
\end{center}
}

\title{Refutation of Hilbert Space Fundamentalism}

\author{Ovidiu Cristinel Stoica}
\affiliation{
 Dept. of Theoretical Physics, NIPNE---HH, Bucharest, Romania. \\
	Email: \href{mailto:cristi.stoica@theory.nipne.ro}{cristi.stoica@theory.nipne.ro},  \href{mailto:holotronix@gmail.com}{holotronix@gmail.com}
	}%

\date{\today}

\begin{abstract}
According to the ``Hilbert Space Fundamentalism'' Thesis, all features of a physical system, including the $3$D-space, a preferred basis, and factorization into subsystems, uniquely emerge from the state vector and the Hamiltonian alone. I give a simplified account of the proof from \href{http://arxiv.org/abs/2102.08620}{arXiv:2102.08620} showing that such emerging structures cannot be both unique and physically relevant.
\end{abstract}


\maketitle

\subsection*{Hilbert Space Fundamentalism}

Quantum Mechanics (QM) represents the state of any closed system, which may even be the entire universe, as a vector $\ket{\psi(t)}$ in a complex \emph{Hilbert space} $\hilbert$. The \emph{Hamiltonian} operator $\wh{H}$ fully specifies its \emph{evolution equation},
\begin{equation}
	\label{eq:unitary_evolution}
	\ket{\psi(t)}=\wh{U}_{t,t_0}\ket{\psi(t_0)},
\end{equation}
where $\wh{U}_{t,t_0}:=e^{-i/\hbar(t-t_0)\wh{H}}$ is the \emph{time evolution operator}. 

Since QM is invariant to unitary symmetries, this seems to justify the following:
\setdefCustomtag{HSF}
\begin{defCustom}
\label{def:HSF}
The \emph{Hilbert-space fundamentalism Thesis (\HSF):} everything about a physical system, including the $3$D-space, a preferred basis, a preferred factorization of the Hilbert space (needed to represent subsystems, {\eg} particles), emerge uniquely from the triple
\begin{equation}
	\label{eq:MQS}
	(\hilbert,\wh{H},\ket{\psi}).
\end{equation}
\end{defCustom}

\setdefCustomtag{MQS}
\begin{defCustom}
	\label{def:MQS}
	We call $(\hilbert,\wh{H},\ket{\psi})$ \emph{minimalist quantum structure}.
\end{defCustom}

The {\HSF} Thesis is sometimes assumed more or less explicitly in some versions of various approaches to QM, like \emph{information-theoretic}, \emph{decoherence}, and \emph{Everett's Interpretation}. 
The sufficiency of the {\MQS} is claimed, perhaps most explicitly, by Carroll and Singh \cite{CarrollSingh2019MadDogEverettianism}, p. 95:
\begin{quote}
	Everything else--including space and fields propagating on it--is emergent from these minimal elements.
\end{quote}

But this goes beyond Everettianism. We read in \cite{Carroll2021RealityAsAVectorInHilbertSpace}
\begin{quote}
The laws of physics are determined solely by the energy eigenspectrum of the Hamiltonian. 
\end{quote} 

Scott Aaronson states in ``The Zen Anti-Interpretation of Quantum Mechanics'' \cite{Aaronson2021TheZenAntiInterpretationOfQuantumMechanics} that a quantum state is 
\begin{quote}
	a unit vector of complex numbers [...] which encodes everything there is to know about a physical system.
\end{quote} 

I confess that I think the {\HSF} Thesis makes sense, if we take seriously the unitary symmetry of QM.
Why would the position basis of the space $\hilbert$ be fundamental, when it's like other bases of $\hilbert$? Just like the reference frames provide  convenient descriptions of the $3$D-space with no physical reality, why would $\ket{\x}$ and $\ket{\p}$ be fundamental among all of the bases of $\hilbert$? And if they play a privileged role, shouldn't this role emerge from the {\MQS} alone?
Unfortunately, we will see that this cannot happen.

\subsection*{Refutation of Hilbert Space Fundamentalism}

In \cite{Sto2021SpaceThePreferredBasisCannotUniquelyEmergeFromTheQuantumStructure} I gave a fully general proof that the {\HSF} Thesis is not true: for any {\MQS}, no emerging structure can be both \emph{unique} and \emph{physically relevant}.
Here I will give a less abstract and less technical version of that proof.

Denote a preferred structure expected to uniquely emerge from the {\MQS} by $\struct_{\wh{H}}^{\ket{\psi}}$, to express its dependence on $\wh{H}$ and $\ket{\psi}$. Any such structure should be invariant, so it can be defined in terms of \emph{tensor objects} from $\botimes^r\hilbert\otimes\botimes^s\hilbert^\ast$ \cite{Weyl1946ClassicalGroupsInvariantsAndRepresentations}.

For a structure to be considered ``preferred'', it has to consist of tensors that satisfy specific conditions, expressed as invariant tensor equations or inequations.
For example,
a preferred basis can be given by a set of operators $\wh{A}_{\alpha}\in\hilbert\otimes\hilbert^\ast$, satisfying the relations $\wh{A}_{\alpha}\wh{A}_{\alpha'}-\wh{A}_{\alpha}\delta_{\alpha\alpha'}=0$,
$\wh{I}_{\hilbert}-\sum_\alpha\wh{A}_{\alpha}$=0, and
$\tr\wh{A}_{\alpha}=1$.

This is generally true, and justifies the following
\begin{definition}
\label{def:kind}
The \emph{kind} $\kind$ of a structure is given by the types of its tensors and the defining tensor (in)equations specific to that structure (more details in \cite{Sto2021SpaceThePreferredBasisCannotUniquelyEmergeFromTheQuantumStructure}).
\end{definition}

In every case of interest, a set of Hermitian operators $\struct_{\wh{H}}^{\ket{\psi}}=\big(\wh{A}_\alpha^{\ket{\psi}}\big)_{\alpha\in\mc{A}}$ suffices to specify the candidate preferred structure.
We already saw this for a preferred basis.
The $3$D-space can be given by the \emph{number operators} for particles of each type $P$ at each point $\xThree$ in space,  $\widehat{a}_P^\dagger(\xThree)\widehat{a}_P(\xThree)$.
Tensor product structures of $\hilbert$ can be specified in terms of local algebras of Hermitian operators.
Therefore, we can limit to Hermitian operators only, avoiding much of the mathematical framework developed in \cite{Sto2021SpaceThePreferredBasisCannotUniquelyEmergeFromTheQuantumStructure} to include any kind of tensor objects.

Two state vectors $\ket{\psi_1}$ and $\ket{\psi_2}$ are \emph{physically equivalent}, $\ket{\psi_1}\sim\ket{\psi_2}$ if they represent the same physical state, \ie if they are related by gauge symmetry, space isometries \etc.
There is a group $G_P$ of unitary or anti-unitary transformations of $\hilbert$, so that $\ket{\psi_1}\sim\ket{\psi_2}$ iff $\ket{\psi_2}=\wh{g}\ket{\psi_1}$ for some $\wh{g}\in G_P$.
This equivalence extends uniquely to tensors on $\hilbert$.
The main claim of the {\HSF} Thesis is:
\begin{condition}[``Essentially unique''-ness]
\label{cond:uniqueness}
For any two $\kind$-structures $\struct_{\wh{H}}^{\ket{\psi}}=(\wh{A}_\alpha^{\ket{\psi}})_{\alpha\in\mc{A}}$ and $\struct_{\wh{H}}^{'\ket{\psi}}=({\wh{A}'}{}_\alpha^{\ket{\psi}})_{\alpha\in\mc{A}}$, there is a symmetry transformation $\wh{g}\in G_P$ so that
\begin{equation}
\label{eq:uniqueness}
	\(\wh{A}_\alpha^{\ket{\psi}}\)_{\alpha\in\mc{A}} = \(\wh{g}{\wh{A}'}{}_\alpha^{\ket{\psi}}\wh{g}^\dagger\)_{\alpha\in\mc{A}}.
\end{equation}
\end{condition}

Often the symmetry transformation from eq. \eqref{eq:uniqueness} simply permutes the operators $\wh{A}_\alpha^{\ket{\psi}}$. For example, space isometries permute the position operators.

To be \emph{physically relevant}, a $3$D-space should be able to distinguish among physically distinct states, {\eg}, since the density $\bra{\psi}\widehat{a}_P^\dagger(\xThree)\widehat{a}_P(\xThree)\ket{\psi}$ can be different for different values of $\ket{\psi}$, the same should be true for any candidate preferred $3$D-space.
In particular, it should detect changes that the Hamiltonian cannot detect (the Hamiltonian cannot distinguish $\ket{\psi}$ from $\ket{\psi'}$ iff there is a unitary $\wh{S}$ so that $[\wh{H},\wh{S}]=0$ and  $\ket{\psi'}=\wh{S}\ket{\psi}$, for example $\wh{S}=\wh{U}_{t_1,t_0}$, but space should be able to distinguish them).
The same applies to the components of $\ket{\psi}$ in a preferred basis,
and to tensor factors of $\hilbert$.
Otherwise, such structures would have no physical relevance.

Physical equivalence requires that the structure has to distinguish not merely the unit vectors $\ket{\psi}\nsim\ket{\psi'}\in\hilbert$, but any other state vectors $\wh{g}\ket{\psi}$ and $\wh{g'}\ket{\psi'}\in\hilbert$ representing the same physical states, for any $\wh{g},\wh{g'}\in G_P$.

\begin{condition}[Physical relevance]
\label{cond:TS_physical_relevance}
There exist at least two unit vectors $\ket{\psi}\nsim\ket{\psi'}\in\hilbert$ representing distinct physical states not distinguished by the Hamiltonian, so that for any symmetry transformations $\wh{g},\wh{g'}\in G_P$,
\begin{equation}
\label{eq:TS_physical_relevance_invar_G}
\(\bra{\psi}\wh{g}^\dagger\struct_{\wh{H}}^{\ket{\psi}}\wh{g}\ket{\psi}\)_{\alpha\in\mc{A}}
\neq\(\bra{\psi'}\wh{g'}^\dagger\struct{}_{\wh{H}}^{\ket{\psi'}}\wh{g'}\ket{\psi'}\)_{\alpha\in\mc{A}}.
\end{equation}
\end{condition}

\image{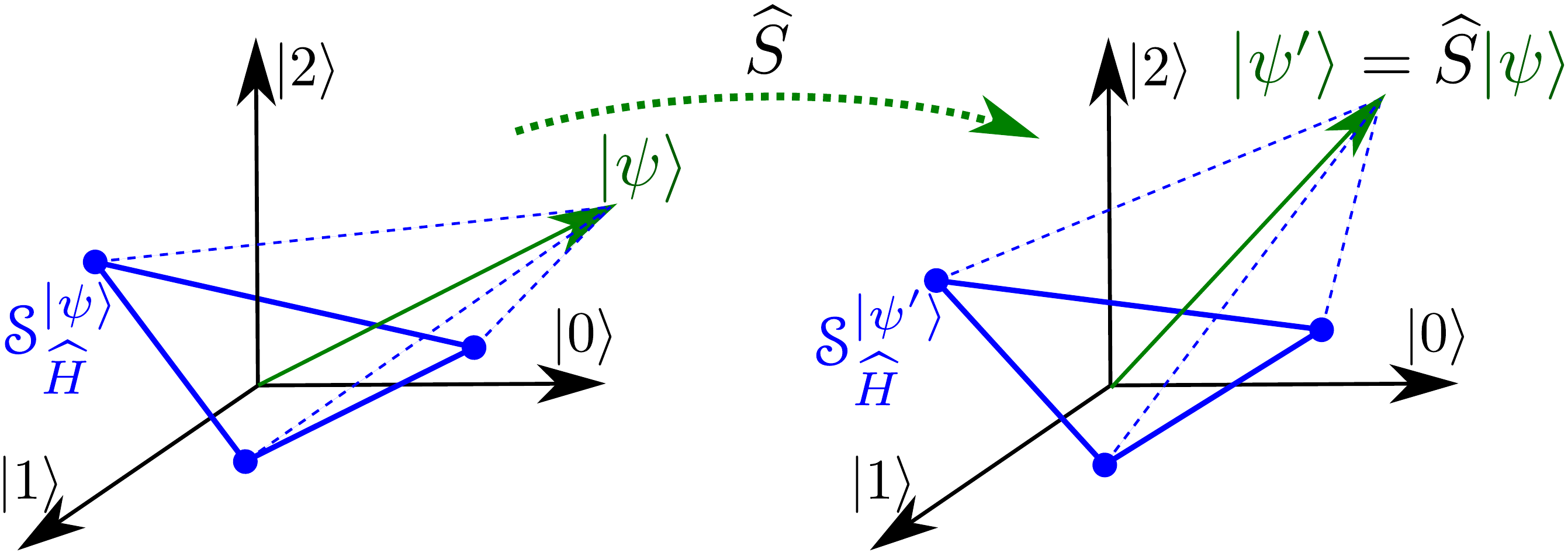}{0.45}{
Physical Relevance Condition \ref{cond:TS_physical_relevance}.
The candidate preferred structures $\textcolor{blueStruct}{\struct_{\wh{H}}^{\ket{\psi}}}=\(\wh{A}_\alpha^{\ket{\psi}}\)_{\alpha\in\mc{A}}$ and $\textcolor{blueStruct}{\struct_{\wh{H}}^{\ket{\psi'}}}=\(\wh{A}_\alpha^{\ket{\psi'}}\)_{\alpha\in\mc{A}}$ are symbolized as solid blue triangles. The dashed blue lines symbolize their relations with the corresponding vectors $\ket{\psi}$ and $\ket{\psi'}=\wh{S}\ket{\psi}$, \cf eq. \eqref{eq:TS_physical_relevance_invar_G}.
Condition \ref{cond:TS_physical_relevance} states that for any state vector $\ket{\psi}$ there are physically distinct state vectors $\ket{\psi'}\neq\ket{\psi}$ for which these relations are different.}

\begin{example}
\label{ex:time_distinguishing}
An obvious example satisfying Condition \ref{cond:TS_physical_relevance} is given by $\wh{S}=\wh{U}_{t_1,t_0}$, where $\ket{\psi(t_1)}=\wh{U}_{t_1,t_0}\ket{\psi(t_0)}$, because we expect that space, a preferred basis, and a preferred factorization to detect state changes in time.
\end{example}

I now prove that these two conditions are incompatible.

\begin{theorem}
	\label{thm:nogo}
	If a $\kind$-structure is physically relevant, then it is not the only $\kind$-structure.
\end{theorem}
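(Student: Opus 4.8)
The plan is to assume the $\kind$-structure $\struct_{\wh{H}}^{\ket{\psi}}$ is physically relevant (Condition~\ref{cond:TS_physical_relevance}) and to produce a {\MQS} admitting two $\kind$-structures not related by any element of $G_P$, contradicting Condition~\ref{cond:uniqueness}.

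First I would invoke Condition~\ref{cond:TS_physical_relevance} to fix two physically inequivalent unit vectors $\ket{\psi}\nsim\ket{\psi'}$ that the Hamiltonian cannot tell apart; by the characterization recalled just before Example~\ref{ex:time_distinguishing}, this means $\ket{\psi'}=\wh{S}\ket{\psi}$ for some unitary $\wh{S}$ with $[\wh{H},\wh{S}]=0$ (one may even take $\wh{S}=\wh{U}_{t_1,t_0}$ as in Example~\ref{ex:time_distinguishing}). I would then transport the structure at $\ket{\psi}$ along $\wh{S}$, defining the Hermitian operators
$$\wh{B}_\alpha:=\wh{S}\,\wh{A}_\alpha^{\ket{\psi}}\,\wh{S}^\dagger,\qquad \alpha\in\mc{A}.$$
The key claim is that $(\wh{B}_\alpha)_{\alpha\in\mc{A}}$ is again a $\kind$-structure, now of the {\MQS} $(\hilbert,\wh{H},\ket{\psi'})$: the $\wh{B}_\alpha$ are Hermitian of the same type as the $\wh{A}_\alpha^{\ket{\psi}}$, and since the defining tensor (in)equations of the kind $\kind$ are invariant, conjugating any solution of them for $(\hilbert,\wh{H},\ket{\psi})$ by the unitary $\wh{S}$ yields a solution for $(\hilbert,\wh{S}\wh{H}\wh{S}^\dagger,\wh{S}\ket{\psi})=(\hilbert,\wh{H},\ket{\psi'})$, using $\wh{S}\wh{H}\wh{S}^\dagger=\wh{H}$ and $\wh{S}\ket{\psi}=\ket{\psi'}$.

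Next I would show $(\wh{B}_\alpha)_\alpha$ cannot be $G_P$-equivalent to the canonical structure $\struct_{\wh{H}}^{\ket{\psi'}}=(\wh{A}_\alpha^{\ket{\psi'}})_\alpha$, arguing by contradiction. If it were, eq.~\eqref{eq:uniqueness} of Condition~\ref{cond:uniqueness} would give $\wh{g}_0\in G_P$ with $\wh{A}_\alpha^{\ket{\psi'}}=\wh{g}_0\,\wh{B}_\alpha\,\wh{g}_0^\dagger$ for all $\alpha$. Then, using $\wh{g}_0^\dagger\wh{g}_0=\wh{I}$ (the anti-unitary case being harmless because the $\wh{B}_\alpha$ are Hermitian and only their expectation values enter), $\bra{\psi'}\wh{g}_0^\dagger\wh{A}_\alpha^{\ket{\psi'}}\wh{g}_0\ket{\psi'}=\bra{\psi'}\wh{B}_\alpha\ket{\psi'}$, and since $\wh{S}$ is unitary with $\wh{S}^\dagger\ket{\psi'}=\ket{\psi}$ also $\bra{\psi'}\wh{B}_\alpha\ket{\psi'}=\bra{\psi}\wh{A}_\alpha^{\ket{\psi}}\ket{\psi}$. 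Hence
$$\big(\bra{\psi}\wh{A}_\alpha^{\ket{\psi}}\ket{\psi}\big)_{\alpha\in\mc{A}}=\big(\bra{\psi'}\wh{g}_0^\dagger\wh{A}_\alpha^{\ket{\psi'}}\wh{g}_0\ket{\psi'}\big)_{\alpha\in\mc{A}},$$
{\ie}an instance of eq.~\eqref{eq:TS_physical_relevance_invar_G} holding with equality for the admissible choices $\wh{g}=\wh{I}$ (first tuple) and $\wh{g'}=\wh{g}_0$ (second tuple), contradicting physical relevance. So $\struct_{\wh{H}}^{\ket{\psi'}}$ and $(\wh{B}_\alpha)_\alpha$ are two $\kind$-structures of $(\hilbert,\wh{H},\ket{\psi'})$ not interrelated by $G_P$, {\ie}a physically relevant $\kind$-structure is not the only one --- the theorem.

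The conceptual core, and the only step I expect to need genuine care, is the ``key claim'': that the transported operators still form a structure \emph{of the same kind}. This is precisely where one must use both that being ``preferred'' is encoded by \emph{invariant} tensor (in)equations and that $\wh{S}$ commutes with $\wh{H}$. Everything after that is bookkeeping with the $G_P$-action; the only pitfalls are to keep $\ket{\psi'}\nsim\ket{\psi}$ in play (so that Condition~\ref{cond:TS_physical_relevance} is a genuine constraint at $\ket{\psi'}$) and to remember that $G_P$ may contain anti-unitary elements, which is innocuous here since only Hermitian operators are ever evaluated.
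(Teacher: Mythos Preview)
Your proof is correct and follows essentially the same approach as the paper: transport a $\kind$-structure along a unitary $\wh{S}$ commuting with $\wh{H}$ to produce a second $\kind$-structure, then show that uniqueness would force an equality of expectation-value tuples that contradicts Condition~\ref{cond:TS_physical_relevance}. The only cosmetic difference is the direction of transport --- the paper pulls $\struct_{\wh{H}}^{\wh{S}\ket{\psi}}$ back by $\wh{S}^{-1}$ to obtain a second structure at $\ket{\psi}$, whereas you push $\struct_{\wh{H}}^{\ket{\psi}}$ forward by $\wh{S}$ to obtain a second structure at $\ket{\psi'}$ --- so the resulting contradiction of eq.~\eqref{eq:TS_physical_relevance_invar_G} has $\wh{g}=\wh{I}$ on one side rather than the other, which is immaterial.
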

\begin{proof}
Fig. \ref{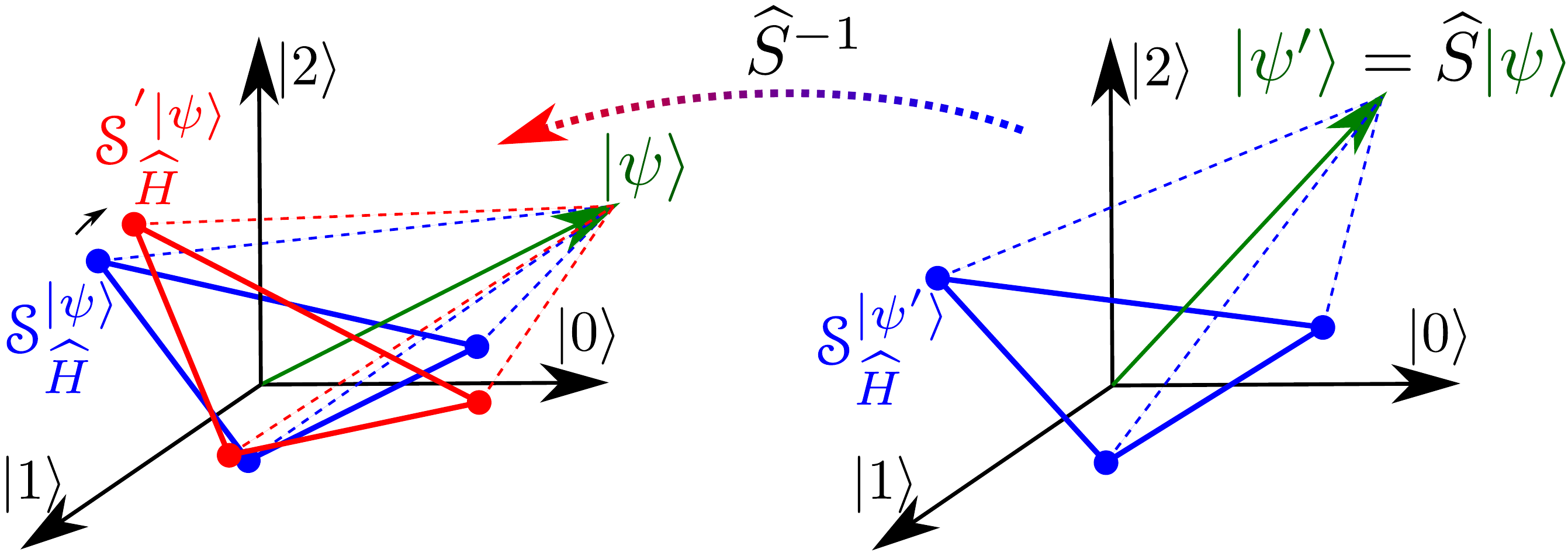} illustrates the following construction.
The $\kind$-structure $\textcolor{blueStruct}{\struct_{\wh{H}}^{\ket{\psi}}}=\(\wh{A}_\alpha^{\ket{\psi}}\)_\alpha$ being a tensor object, its defining conditions are invariant to any unitary transformation $\wh{S}$ that commutes with $\wh{H}$. 
Then $\wh{S}^{-1}$  transforms $\textcolor{blueStruct}{\struct_{\wh{H}}^{\wh{S}\ket{\psi}}}$ into a $\kind$-structure $\textcolor{redStruct}{\struct_{\wh{H}}^{'\ket{\psi}}}=\wh{S}^{-1}\left[\textcolor{blueStruct}{\struct_{\wh{H}}^{\wh{S}\ket{\psi}}}\right]$ for $\ket{\psi}$,
\begin{equation}
	\label{eq:struct_unitary}
	\textcolor{redStruct}{\struct_{\wh{H}}^{'\ket{\psi}}}:=\(\wh{S}^{-1}\wh{A}_\alpha^{\wh{S}\ket{\psi}}\wh{S}\)_\alpha.
\end{equation}

From the uniqueness condition \eqref{eq:uniqueness}, \eqref{eq:struct_unitary} implies that
\begin{equation}
	\label{eq:uniqueness_unitary_inv}
	\(\wh{g}^{-1}\wh{A}_\alpha^{\ket{\psi}}\wh{g}\)_\alpha \stackrel{\eqref{eq:struct_unitary}}{=} \(\wh{S}^{-1}\wh{A}_\alpha^{\wh{S}\ket{\psi}}\wh{S}\)_\alpha
\end{equation}
for some transformation $\wh{g}\in G_P$.
From eq. \eqref{eq:uniqueness_unitary_inv}
\begin{equation}
	\label{eq:uniqueness_unitary_inv_psi}
\begin{aligned}
	\(\bra{\psi}\wh{g}^{-1}\wh{A}_\alpha^{\ket{\psi}}\wh{g}\ket{\psi}\)_{\alpha}
	&\stackrel{\eqref{eq:uniqueness_unitary_inv}}{=} \(\bra{\psi}\wh{S}^{-1}\wh{A}_\alpha^{\wh{S}\ket{\psi}}\wh{S}\ket{\psi}\)_{\alpha}\\
	&\ =\ \(\bra{\psi'}\wh{A}_\alpha^{\ket{\psi'}}\ket{\psi'}\)_\alpha.\\
\end{aligned}
\end{equation}

This contradicts Condition \ref{cond:TS_physical_relevance}.
\end{proof}

\image{nonuniqueness.pdf}{0.45}{Construction of the 
$\kind$-structure $\textcolor{redStruct}{\struct_{\wh{H}}^{'\ket{\psi}}}=\wh{S}^{-1}\left[\textcolor{blueStruct}{\struct_{\wh{H}}^{\wh{S}\ket{\psi}}}\right]$ for $\ket{\psi}$ used in the proof of Theorem \ref{thm:nogo}.
}

Example \ref{ex:time_distinguishing} gives an infinite family of physically distinct ways to choose the $3$D-space. But in fact there is such an infinite family for each generator commuting with $\wh{H}$.

Theorem \ref{thm:nogo} was applied explicitly to various constructions assumed by the {\HSF} Thesis to emerge from the {\MQS}, and showed that, if they are physically relevant, they are not unique \cite{Sto2021SpaceThePreferredBasisCannotUniquelyEmergeFromTheQuantumStructure}. In particular, uniqueness fails for the generalized preferred basis (including for subsystems), the tensor product structure, emergent $3$D-space, emergent classicality \etc.
Such structures cannot emerge uniquely from the {\MQS}, a choice is always required.

\vspace{0.1in}

A question stands: what breaks the symmetry of QM?

\vspace{-0.15in}

\providecommand{\bysame}{\leavevmode\hbox to3em{\hrulefill}\thinspace}
\providecommand{\MR}{\relax\ifhmode\unskip\space\fi MR }
\providecommand{\MRhref}[2]{%
  \href{http://www.ams.org/mathscinet-getitem?mr=#1}{#2}
}
\providecommand{\href}[2]{#2}

\end{document}